\let\doendproof\endproof
\renewcommand\endproof{~\hfill$\boxtimes$\doendproof}
\def\comment#1{}
\def\withcomments{
  \newcounter{mycommentcounter}
   \def\comment##1{\refstepcounter{mycommentcounter}%
    \ifhmode%
     \unskip%
     {\dimen1=\baselineskip \divide\dimen1 by 2 %
       \raise\dimen1\llap{\tiny\bfseries \textcolor{red}{-\themycommentcounter-}}}\fi%
     \marginpar[{\renewcommand{\baselinestretch}{0.8}%
       \hspace*{3em}\begin{minipage}{5em}\footnotesize [\themycommentcounter]: \raggedright ##1\end{minipage}}]{\renewcommand{\baselinestretch}{0.8}%
       \begin{minipage}{5em}\footnotesize [\themycommentcounter]: \raggedright ##1\end{minipage}}}
  }
\definecolor{orange}{RGB}{255,80,0}
\newcommand{\strash}[1]{{\color{orange}[DS: #1]}}
\renewcommand{\strash}[1]{}
\newcommand{\AlgName}[1]{\ensuremath{\text{{\sf #1}}}}
\newcommand{\advanced}{\AlgName{Advanced}}
\newcommand{\simple}{\AlgName{Simple}}
\newcommand{\critical}{\AlgName{Critical}}
\newcommand{\maxcritical}{\AlgName{MaxCritical}}
\newcommand{\ai}{\AlgName{B\&R}}
\newcommand{\mcs}{\AlgName{MCS}}
\title{On the Power of Simple Reductions for the Maximum Independent Set Problem}
\author
{
  Darren Strash
}
\institute{Institute of Theoretical Informatics \\ Karlsruhe Institute of Technology \\ Karlsruhe, Germany \\ \email{strash@kit.edu}}
\begin{document}

\maketitle

\begin{abstract}
Reductions---rules that reduce input size while maintaining the ability to compute an optimal solution---are critical for developing efficient maximum independent set algorithms in both theory and practice. While several simple reductions have previously been shown to make small domain-specific instances tractable in practice, it was only recently shown that advanced reductions (in a measure-and-conquer approach) can be used to solve real-world networks on millions of vertices [Akiba and Iwata, TCS 2016]. In this paper we compare these state-of-the-art reductions against a small suite of simple reductions, and come to two conclusions: just two simple reductions---vertex folding and isolated vertex removal---are sufficient for many real-world instances, and further, the power of the advanced rules comes largely from their initial application (i.e., kernelization), and not their repeated application during branch-and-bound. As a part of our comparison, we give the first experimental evaluation of a reduction based on maximum critical independent sets, and show it is highly effective in practice for medium-sized networks.
\end{abstract}

\begin{keywords}
maximum independent set, minimum vertex cover, kernelization, reductions, exact algorithms
\end{keywords}

\section{Introduction}
Given a graph $G=(V,E)$, the maximum independent set problem asks us to compute a maximum cardinality set of vertices $I\subseteq V$ such that no vertices in $I$ are adjacent to one another. Such a set is called a \emph{maximum independent set} (MIS).
The maximum independent set problem has applications in classification theory, information retrieval, computer vision~\cite{feo1994greedy}, computer graphics~\cite{sander-mesh-2008}, map labeling~\cite{gemsa2014dynamiclabel,verweig-map-labelling-1999} and routing in road networks~\cite{kieritz-contraction-2010}, to name a few.  However, the maximum independent set problem is NP hard~\cite{garey-johnson-1979}, and therefore, the currently-best-known algorithms take exponential time.

\subsection{Previous Work}
Most previous work has focused on the \emph{maximum clique} problem and the \emph{minimum vertex cover} problem, which are complementary to ours. That is, the maximum clique in the complement graph $\bar{G}$ is a maximum independent set in $G$, and if $C$ is a minimum vertex cover in G, then $V\setminus C$ is a maximum independent set.  

For computing a maximum clique, there are many branch-and-bound algorithms that are efficient in practice~\cite{segundo-recoloring,segundo-bitboard-2011,tomita-recoloring}. These algorithms achieve fast running times by prescribing the order to select vertices during search and by implementing fast-but-effective pruning techniques, such as those based on approximate graph coloring~\cite{tomita-recoloring} or MaxSAT~\cite{li-maxsat-2013}. Among the fastest of these algorithms is the \mcs{} algorithm by Tomita et al.~\cite{tomita-recoloring}, which is competitive in dense graphs even against algorithms that use bit parallelism~\cite{segundo-recoloring}. Further priming these algorithms with a large initial solution obtained using local search~\cite{andrade-2012} can be surprisingly effective at speeding up search~\cite{batsyn-mcs-ils-2014}.

Several techniques based on kernelization~\cite{abu-khzam-2007,bodlaender-kernelization-2013} have been very promising in solving both the maximum independent set and minimum vertex cover problems. In particular, Butenko et al.~\cite{butenko-correcting-codes-2009} showed that isolated vertex reductions disconnect medium-sized graphs derived from error-correcting codes into small connected components that can be solved optimally. Butenko and Trukhanov~\cite{butenko-trukhanov} introduced a reduction based on critical independent sets, finding exact maximum independent sets in graphs with up to 18,000 vertices generated with the Sanchis graph generator~\cite{sanchis-1996}. Though these works apply reduction techniques as a preprocessing step, further works apply reductions as a natural step of the algorithm. In the area of exact algorithms, it has long been clear that applying reductions in a measure-and-conquer approach can improve the theoretical running time of vertex cover and independent set algorithms~\cite{bourgeois-reduction-theory-2012,fomin-2010}. However, few experiments have been conducted on the real-world efficacy of these techniques.

Recently, Akiba and Iwata~\cite{akiba-tcs-2016} showed that applying advanced reductions with sophisticated branching rules in a measure-and-conquer approach is highly effective in practice. They show that an exact minimum vertex cover, and therefore an exact maximum independent set, can be found in many large complex networks with up to 3.2 million vertices in much less than a second. Further, on nearly all of their inputs, the state-of-the-art branch-and-bound algorithm \mcs~\cite{tomita-recoloring} fails to finish within 24 hours. Thus, their method is orders of magnitude faster on these real-world graphs.

\subsection{Our Results}
While the results of Akiba and Iwata~\cite{akiba-tcs-2016} are impressive, it is not clear how much their advanced techniques actually improve \emph{search} compared to existing techniques. A majority of the graphs they tested have kernel size zero, and therefore no branching is required. We show that just 2 simple reduction rules---isolated vertex removal and vertex folding---are sufficient to make many of their test instances tractable with standard branch-and-bound solvers. We further provide the first comparison with another class of reductions that are effective on real-world complex networks: the critical independent set reduction of Butenko and Trukhanov~\cite{butenko-trukhanov} and the variant due to Larson~\cite{larson-2007}, which computes a maximum critical independent set.

\section{Preliminaries}
We work with an undirected graph $G = (V,E)$ where $V$ is a set of $n$ vertices and $E\subset \{\{u,v\}\mid u,v\in V\}$ is a set of $m$ edges. The open neighborhood of a vertex $v$, denoted $N(v)$, is the set of all vertices $w$ such that $(v,w)\in E$. We further denote the closed neighborhood by $N[v]=N(v)\cup\{v\}$. We similarly define the open and closed neighborhoods of a set of vertices $U$ to be $N(U) = \bigcup_{u\in U}N(u)$ and $N[U] = N(U) \cup U$, respectively. Lastly, for vertices $S\subseteq V$, the induced subgraph $G[S]\subseteq G$ is the graph on the vertices in $S$ with edges in $E$ between vertices in $S$.

\subsection{Reduction Rules}
There are several well-known reduction rules that can be applied to graphs for the minimum vertex cover problem (and hence the maximum independent set problem) to reduce the input size to its irreducible equivalent, the \emph{kernel}~\cite{abu-khzam-2007}. 
Each reduction allows us to choose vertices that are in some MIS by following simple rules. If an MIS is found in the kernel, then undoing the reductions gives an MIS in the original graph.
Reduction rules are typically applied as a preprocessing step. The hope is that the kernel is \emph{small enough} to be solved by existing solvers in feasible time. If the kernel is empty, then a maximum independent set is found by simply undoing the reductions.  
We now briefly describe three classes of reduction rules that we consider here.

\subsubsection{Simple Reductions.}

We first describe two simple reductions:  \emph{isolated vertex removal} and \emph{vertex folding}.

An isolated vertex, also called a \emph{simplicial vertex}, is a vertex $v$ whose neighborhood forms a clique. That is, there is a clique $C$ such that $V(C) \cap N(v) = N(v)$. Since $v$ has no neighbors outside of the clique, it must be in \emph{some} maximum independent set. Therefore, we can add $v$ to the maximum independent set we are computing, and remove $v$ and $C$ from the graph. Isolated vertex removal was shown by Butenko et al.~\cite{butenko-correcting-codes-2009} to be highly effective in finding exact maximum independent sets on graphs derived from error-correcting codes~\cite{butenko-correcting-codes-2009}. This reduction is typically restricted to vertices of degree zero, one, and two in the literature. However, we consider vertices of any degree.

Vertex folding was first introduced by Chen et al.~\cite{chen-1999} to reduce the theoretical running time of exact branch-and-bound algorithms for the maximum independent set problem. This reduction is applied whenever there is a vertex $v$ with degree 2 and non-adjacent neighbors $u$ and $w$. Either $v$ or both $u$ and $w$ are in some MIS. Therefore, we can contract $u$, $v$, and $w$ to a single vertex $v^{\prime}$ and add the appropriate vertices to the MIS after finding an MIS in the kernel.

\subsubsection{Critical Independent Set Reductions.}
One further reduction method shown to be effective in practice for sparse graphs is the critical independent set reduction by Butenko and Trukhanov~\cite{butenko-trukhanov}. A \emph{critical set} is a set $U\subseteq V$ that maximizes $|U| - |N(U)|$ and $I_c = U\setminus N(U)$ is called a \emph{critical independent set}. Butenko and Trukhanov show that every critical independent set is contained in some maximum independent set, and show that one can be found in polynomial time. Their algorithm works by repeatedly computing some critical independent set $I_c$ and removing $N[I_c]$ from the graph, stopping when $I_c$ is empty.

\begin{figure}[tb]
\centering
\includegraphics[]{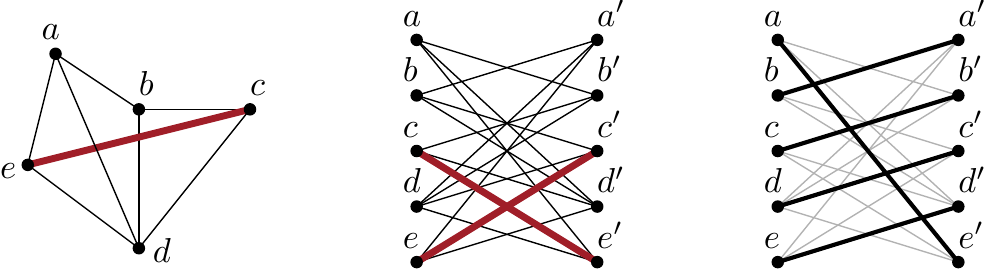}
\caption{A graph $G$ (left) and its bi-double graph $B(G)$ (middle), illustrating that edges of $G$ become two edges in $B(G)$. Right: a maximum matching (in this instance, a perfect matching) in $B(G)$.}
\label{figure:bi-double}
\vspace*{-0.3cm}
\end{figure}

 A critical set can be found by first computing the \emph{bi-double graph} $B(G)$, then computing a maximum independent set in $B(G)$~\cite{ageev-1994,larson-2007,zhang-1990}. $B(G)$ is a bipartite graph with vertices $V\cup V^{\prime}$ where $V'$ is a copy of $V$, and contains edge $(u,v^\prime)\subseteq V\times V^{\prime}$ if and only if $(u,v)\in E$. Since $B(G)$ is bipartite, the maximum independent set in $B(G)$ can be solved by computing a maximum bipartite matching in polynomial time~\cite{hopcroft-karp}.  

Butenko and Trukhanov~\cite{butenko-trukhanov} use the standard augmenting path algorithm to compute a maximum independent set in $B(G)$, and hence find a critical set in $G$, in $O(nm)$ time.
One drawback of their approach is that the quality of the reduction depends on the maximum independent set found in the bi-double graph. As noted by Larson~\cite{larson-2007}, if there is a perfect matching in $B(G)$ (such as in Fig.~\ref{figure:bi-double}, right), then $G$ has an empty critical empty set. However, in the experiments by Butenko and Trukhanov~\cite{butenko-trukhanov} these worst cases were not observed.

To prevent the worst-case, Larson~\cite{larson-2007} gave the first algorithm to find a maximum critical independent set, which accumulates vertices that are in \emph{some} critical independent set and excludes their neighbors. He further gave a simple method to test if a vertex $v$ is in a critical independent set: vertex $v$ is in a critical independent set if and only if $\alpha{(B(G))} = \alpha{(B(G) - \{v,v^{\prime}\} - N(\{v,v{^\prime}\}))} + 2$, where $\alpha(\cdot)$ is the \emph{independence number}---the size of a maximum independent set. A naive approach would compute a new maximum matching from scratch to compute the independence number of each such bi-double graph, taking $O(n^2m)$ time total (or $O(n^{3/2}m)$ time with the Hopcroft--Karp algorithm~\cite{hopcroft-karp}). However, we can save the matching between executions to ensure only few augmenting paths are computed for each subsequent matching, giving $O(m^2)$ running time, which is better when $m = o(n^{3/2})$.
See Appendix~\ref{section:matchings} for further details.

\subsubsection{Advanced Reduction Rules.}
We list the advanced reduction rules from Akiba and Iwata~\cite{akiba-tcs-2016}. Refer to Akiba and Iwata~\cite{akiba-tcs-2016} for a more thorough discussion, including implementation details.

Firstly, they use vertex folding and degree-1 isolated vertex removal (also called \emph{pendant} vertex removal), as described previously. They further test a full suite of other reductions from the literature, which we now briefly describe.

\noindent\emph{Linear Programming:}
A well-known~\cite{nemhauser-1975} linear programming relaxation for the MIS problem with a half-integral solution (i.e., using only values 0, 1/2, and 1) can be solved using bipartite matching: maximize $\sum_{v\in V}{x_v}$ such that $\forall (u, v) \in E$, $x_u + x_v \leq 1$ and $\forall v \in V$, $x_v \geq 0$. Vertices with value 1 must be in some MIS and can thus be removed from $G$ along with their neighbors. Akiba and Iwata~\cite{akiba-tcs-2016} compute  a solution whose half-integral part is minimal~\cite{iwata-2014}.

\noindent\emph{Unconfined~\cite{Xiao201392}:} Though there are several definitions of \emph{unconfined} vertex in the literature, we use the simple one from Akiba and Iwata~\cite{akiba-tcs-2016}. A vertex $v$ is \emph{unconfined} when determined by the following simple algorithm. First, initialize $S = \{v\}$. Then find a $u \in N(S)$ such that $|N(u) \cap S| = 1$ and $|N(u) \setminus N[S]|$ is minimized. If there is no such vertex, then $v$ is confined. If $N(u) \setminus N[S] = \emptyset$, then $v$ is unconfined.  If $N(u)\setminus N[S]$ is a single vertex $w$, then add $w$ to $S$ and repeat the algorithm. Otherwise, $v$ is confined. Unconfined vertices can be removed from the graph, since there always exists an MIS $I$ with no unconfined vertices.

\noindent\emph{Twin~\cite{Xiao201392}:} Let $u$ and $v$ be vertices of degree 3 with $N(u) = N(v)$. If $G[N(u)]$ has edges, then add $u$ and $v$ to $I$ and remove $u$, $v$, $N(u)$, $N(v)$ from $G$. Otherwise, some vertices in $N(u)$ may belong to some MIS $I$. We still remove $u$, $v$, $N(u)$ and $N(v)$ from $G$, and add a new gadget vertex $w$ to $G$ with edges to $u$'s two-neighborhood (vertices at a distance 2 from $u$). If $w$ is in the computed MIS, then none of $u$'s two-neighbors are $I$, and therefore $N(u) \subseteq I$. Otherwise, if $w$ is not in the computed MIS, then some of $u$'s two-neighbors are in $I$, and therefore $u$ and $v$ are added to $I$.

\noindent\emph{Alternative:} Two sets of vertices $A$ and $B$ are set to be \emph{alternatives} if $|A| = |B| \geq 1$ and there exists an MIS $I$ such that $I\cap(A\cup B)$ is either $A$ or $B$. Then we remove $A$ and $B$ and $C = N(A)\cap N(B)$ from $G$ and add edges from each $a \in N(A)\setminus C$ to each $b\in N(B)\setminus C$.
Then we add either $A$ or $B$ to $I$, depending on which neighborhood has vertices in $I$. Two structures are detected as alternatives. First, if $N(v)\setminus \{u\}$ induces a complete graph, then $\{u\}$ and $\{v\}$ are alternatives (a \emph{funnel}). Next, if there is a chordless 4-cycle $a_1b_1a_2b_2$ where each vertex has at least degree 3. Then sets $A=\{a_1, a_2\}$ and $B=\{b_1, b_2\}$ are alternatives when $|N(A) \setminus B| \leq 2$, $|N(A) \setminus B| \leq 2$, and $N(A) \cap N(B) = \emptyset$.

\noindent\emph{Packing~\cite{akiba-tcs-2016}:} 
Given a non-empty set of vertices $S$, we may specify a \emph{packing constraint} $\sum_{v\in S}x_v \leq k$, where $x_v$ is 0 when $v$ is in some MIS $I$ and 1 otherwise. Whenever a vertex $v$ is excluded from $I$ (i.e., in the unconfined reduction), we remove $x_v$ from the packing constraint and decrease the upper bound of the constraint by one. Initially, packing constraints are created whenever a vertex $v$ is excluded from or included in the MIS. The simplest case for the packing reduction is when $k$ is zero: all vertices must be in $I$ to satisfy the constraint. Thus, if there is no edge in $G[S]$, $S$ may be added to $I$, and $S$ and $N(S)$ are removed from $G$. Other cases are much more complex. Whenever packing reductions are applied, existing packing constraints are updated and new ones are added.

\section{Experimental Results}
We first investigate the size of kernels computed by all kernelization techniques. We test four techniques: (1) using only isolated vertex removal and vertex folding (\simple), (2) using the critical independent set reduction rule due to Butenko and Trukhanov~\cite{butenko-trukhanov} (\critical), (3) the version of (2) by Larson~\cite{larson-2007} that always computes a maximum critical independent set (\maxcritical), and (4) the reductions tested by Akiba and Iwata~\cite{akiba-tcs-2016} (\advanced). Note that we use the standard augmenting paths algorithm for computing a maximum bipartite matching (and not the Hopcroft--Karp algorithm~\cite{hopcroft-karp}) to be consistent with the original experiments by Butenko and Trukhanov~\cite{butenko-trukhanov}.

Next, we investigate the time to compute an exact solution on large instances. We test two algorithms: the full branch-and-reduce algorithm due to Akiba and Iwata~\cite{akiba-tcs-2016} (\ai), and \simple{} kernelization followed by \mcs, a state-of-the-art clique solver due to Tomita et al.~\cite{tomita-recoloring} (\simple+\mcs). We use our own implementation of \iftrue \mcs\footnote{\url{https://github.com/darrenstrash/open-mcs}}, \else \mcs, \fi{} since the code for the original implementation is not available and because we modify the MCS algorithm to solve the maximum independent set problem.
We choose MCS because it is one of the leading solvers in practice, even competing with the bit-board implementations of San Segundo et al.~\cite{segundo-bitboard-2011,segundo-recoloring}.

\subsubsection{Instances.}
We run our algorithms on synthetically-generated graphs, as well as a large corpus of real-world sparse data sets. For synthetic cases, we use graphs generated with the Sanchis graph generator~\cite{sanchis-1996}. For medium-sized real-world graphs, we consider small Erd\H{o}s co-authorship networks from the Pajek data set~\cite{pajek} and biological networks from the Biological General Repository for Interaction Datasets v3.3.112 (BioGRID)~\cite{biogrid-2006}. We further consider large complex networks (including co-authorship networks, road networks, social networks, peer-to-peer networks, and Web crawl graphs) from the Koblenz Network Collection (KONECT)~\cite{konect}, the Stanford Large Network Dataset Repository (SNAP)~\cite{snap}, and the Laboratory for Web Algorithmics (LAW)~\cite{boldi-2011,boldi-2004}.

\subsection{Experimental Setup}
All of our experiments were exclusively run on a machine with Ubuntu 14.04.3 and Linux kernel version 3.13.0-77. The machine has four Octa-Core Intel Xeon E5-4640 processors running at 2.4GHz, 512 GB local memory, 4×20 MB L3-Cache, and 4×8×256 KB L2-Cache. For \advanced{} reductions as well as \ai, we compiled and ran the original Java implementation of Akiba and Iwata\footnote{\url{https://github.com/wata-orz/vertex_cover}}~\cite{akiba-tcs-2016} with Java 8 update 60. We implemented all other algorithms\footnote{\url{https://github.com/darrenstrash/kernel-mis}} in C++11, and compiled them with gcc version 4.8.4 with optimization flag \texttt{-O2}. Each algorithm was run for one hour. All running times listed in our tables are in seconds, and we mark a data set with `-' when an algorithm does not finish within the time limit. We indicate the best solution, and the time to achieve it, by marking the value \textbf{bold}.

\begin{table}[!tb]
\caption{We give the kernel size $k$ and running time $t$ for each reduction technique on synthetically-generated Sanchis data sets. We also list the data used to generate the graphs: the number of vertices $n$, number of edges $m$, and independence number $\alpha(G)$.}
\label{table:kernel-sanchis}
\vspace*{-.5cm}
\scriptsize
\setlength{\tabcolsep}{.7ex}
\begin{center}
\begin{tabular}[tb]{rrr@{\hskip 13pt} rr@{\hskip 13pt} rr@{\hskip 13pt} rr@{\hskip 13pt} rr} 
\toprule
\multicolumn{3}{c}{Graph} & \multicolumn{2}{c}{\critical}& \multicolumn{2}{c}{\maxcritical}& \multicolumn{2}{c}{\advanced}&\multicolumn{2}{c}{\simple} \\
\multicolumn{1}{c}{$n$}  &  \multicolumn{1}{c}{$m$} &  \multicolumn{1}{c}{$\alpha(G)$} & \multicolumn{1}{c}{$k$} & \multicolumn{1}{c}{$t$}    & \multicolumn{1}{c}{$k$} & \multicolumn{1}{c}{$t$} & \multicolumn{1}{c}{$k$} & \multicolumn{1}{c}{$t$} & \multicolumn{1}{c}{$k$} & \multicolumn{1}{c}{$t$}  \\
\cmidrule(r){1-1}\cmidrule(r){2-2}\cmidrule(r){3-3}\cmidrule(r){4-4}\cmidrule(r){5-5}\cmidrule(r){6-6}\cmidrule(r){7-7}\cmidrule(r){8-8}\cmidrule(r){9-9}\cmidrule(r){10-10}\cmidrule{11-11}
\numprint{1000}&\numprint{186723}&\numprint{505}&\textbf{\numprint{0}}&\textbf{\numprint{0.06}}&\textbf{\numprint{0}}&\numprint{0.73}&\textbf{\numprint{0}}&\numprint{0.16}&\numprint{1000}&\numprint{0.00} \\
\numprint{1000}&\numprint{181256}&\numprint{524}&\textbf{\numprint{0}}&\numprint{0.16}&\textbf{\numprint{0}}&\numprint{0.86}&\textbf{\numprint{0}}&\textbf{\numprint{0.15}}&\numprint{1000}&\numprint{0.00} \\
\numprint{2000}&\numprint{711955}&\numprint{1067}&\textbf{\numprint{0}}&\numprint{0.74}&\textbf{\numprint{0}}&\numprint{5.33}&\textbf{\numprint{0}}&\textbf{\numprint{0.37}}&\numprint{2000}&\numprint{0.01} \\
\numprint{2000}&\numprint{686341}&\numprint{1103}&\textbf{\numprint{0}}&\numprint{1.08}&\textbf{\numprint{0}}&\numprint{5.57}&\textbf{\numprint{0}}&\textbf{\numprint{0.34}}&\numprint{2000}&\numprint{0.01} \\
\numprint{3000}&\numprint{536831}&\numprint{1535}&\numprint{2930}&\numprint{0.26}&\textbf{\numprint{0}}&\numprint{1.54}&\textbf{\numprint{0}}&\textbf{\numprint{0.02}}&\textbf{\numprint{0}}&\numprint{0.14} \\
\numprint{3000}&\numprint{513773}&\numprint{1563}&\numprint{2874}&\numprint{0.24}&\textbf{\numprint{0}}&\numprint{1.59}&\textbf{\numprint{0}}&\textbf{\numprint{0.02}}&\textbf{\numprint{0}}&\numprint{0.13} \\
\numprint{4000}&\numprint{929429}&\numprint{2069}&\numprint{3862}&\numprint{0.49}&\textbf{\numprint{0}}&\numprint{2.83}&\textbf{\numprint{0}}&\textbf{\numprint{0.03}}&\textbf{\numprint{0}}&\numprint{0.31} \\
\numprint{4000}&\numprint{805011}&\numprint{2309}&\textbf{\numprint{0}}&\numprint{3.63}&\textbf{\numprint{0}}&\numprint{29.17}&\textbf{\numprint{0}}&\textbf{\numprint{0.47}}&\numprint{4000}&\numprint{0.01} \\
\numprint{5000}&\numprint{1258433}&\numprint{2717}&\numprint{4566}&\numprint{0.70}&\textbf{\numprint{0}}&\numprint{4.85}&\textbf{\numprint{0}}&\textbf{\numprint{0.03}}&\textbf{\numprint{0}}&\numprint{0.50} \\
\numprint{5000}&\numprint{517013}&\numprint{3132}&\textbf{\numprint{0}}&\numprint{3.37}&\textbf{\numprint{0}}&\numprint{25.42}&\textbf{\numprint{0}}&\numprint{0.27}&\textbf{\numprint{0}}&\textbf{\numprint{0.18}} \\
\numprint{6000}&\numprint{1731295}&\numprint{3302}&\numprint{5396}&\numprint{1.02}&\textbf{\numprint{0}}&\numprint{7.06}&\textbf{\numprint{0}}&\textbf{\numprint{0.04}}&\textbf{\numprint{0}}&\numprint{0.78} \\
\numprint{6000}&\numprint{1507280}&\numprint{3412}&\numprint{5176}&\numprint{0.94}&\textbf{\numprint{0}}&\numprint{7.18}&\textbf{\numprint{0}}&\textbf{\numprint{0.04}}&\textbf{\numprint{0}}&\numprint{0.65} \\
\numprint{7000}&\numprint{588713}&\numprint{4493}&\numprint{5014}&\numprint{1.06}&\textbf{\numprint{0}}&\numprint{10.91}&\textbf{\numprint{0}}&\textbf{\numprint{0.03}}&\textbf{\numprint{0}}&\numprint{0.23} \\
\numprint{8000}&\numprint{3099179}&\numprint{4394}&\numprint{7212}&\numprint{1.96}&\textbf{\numprint{0}}&\numprint{12.41}&\textbf{\numprint{0}}&\textbf{\numprint{0.06}}&\textbf{\numprint{0}}&\numprint{1.93} \\
\numprint{8000}&\numprint{428619}&\numprint{5249}&\textbf{\numprint{0}}&\numprint{5.58}&\textbf{\numprint{0}}&\numprint{48.21}&\textbf{\numprint{0}}&\textbf{\numprint{0.17}}&\textbf{\numprint{0}}&\numprint{0.29} \\
\numprint{9000}&\numprint{4040615}&\numprint{4927}&\textbf{\numprint{0}}&\numprint{26.23}&\textbf{\numprint{0}}&\numprint{239.21}&\textbf{\numprint{0}}&\textbf{\numprint{0.74}}&\numprint{9000}&\numprint{0.04} \\
\numprint{9000}&\numprint{451349}&\numprint{5899}&\numprint{6202}&\numprint{1.45}&\textbf{\numprint{0}}&\numprint{18.20}&\textbf{\numprint{0}}&\textbf{\numprint{0.02}}&\textbf{\numprint{0}}&\numprint{0.20} \\
\numprint{10000}&\numprint{4794713}&\numprint{5507}&\numprint{8986}&\numprint{3.02}&\textbf{\numprint{0}}&\numprint{19.86}&\textbf{\numprint{0}}&\textbf{\numprint{0.07}}&\textbf{\numprint{0}}&\numprint{3.78} \\
\numprint{10000}&\numprint{3775385}&\numprint{5811}&\textbf{\numprint{0}}&\numprint{37.28}&\textbf{\numprint{0}}&\numprint{274.31}&\textbf{\numprint{0}}&\textbf{\numprint{1.27}}&\numprint{9994}&\numprint{0.06} \\
\numprint{11000}&\numprint{6344649}&\numprint{5901}&\numprint{10198}&\numprint{4.35}&\textbf{\numprint{0}}&\numprint{23.30}&\textbf{\numprint{0}}&\textbf{\numprint{0.09}}&\textbf{\numprint{0}}&\numprint{5.62} \\
\numprint{11000}&\numprint{2479688}&\numprint{6862}&\textbf{\numprint{0}}&\numprint{33.91}&\textbf{\numprint{0}}&\numprint{223.67}&\textbf{\numprint{0}}&\numprint{1.44}&\textbf{\numprint{0}}&\textbf{\numprint{1.10}} \\
\numprint{12000}&\numprint{5378750}&\numprint{6973}&\numprint{8552}&\numprint{16.46}&\textbf{\numprint{0}}&\numprint{78.69}&\textbf{\numprint{0}}&\textbf{\numprint{0.07}}&\textbf{\numprint{0}}&\numprint{4.55} \\
\numprint{12000}&\numprint{4827152}&\numprint{7098}&\numprint{194}&\numprint{63.80}&\textbf{\numprint{0}}&\numprint{368.03}&\textbf{\numprint{0}}&\textbf{\numprint{0.07}}&\textbf{\numprint{0}}&\numprint{5.35} \\
\numprint{13000}&\numprint{5638263}&\numprint{7698}&\textbf{\numprint{0}}&\numprint{75.59}&\textbf{\numprint{0}}&\numprint{510.73}&\textbf{\numprint{0}}&\numprint{6.56}&\textbf{\numprint{0}}&\textbf{\numprint{4.50}} \\
\numprint{13000}&\numprint{1319528}&\numprint{8474}&\textbf{\numprint{0}}&\numprint{24.54}&\textbf{\numprint{0}}&\numprint{185.99}&\textbf{\numprint{0}}&\textbf{\numprint{0.40}}&\textbf{\numprint{0}}&\numprint{1.15} \\
\numprint{14000}&\numprint{10723774}&\numprint{7417}&\numprint{4}&\numprint{78.11}&\textbf{\numprint{0}}&\numprint{819.25}&\textbf{\numprint{0}}&\textbf{\numprint{25.24}}&\numprint{13880}&\numprint{0.67} \\
\numprint{14000}&\numprint{3250904}&\numprint{8844}&\numprint{10312}&\numprint{4.53}&\textbf{\numprint{0}}&\numprint{47.93}&\textbf{\numprint{0}}&\textbf{\numprint{0.05}}&\textbf{\numprint{0}}&\numprint{2.45} \\
\numprint{15000}&\numprint{6799463}&\numprint{8993}&\numprint{12014}&\numprint{5.82}&\textbf{\numprint{0}}&\numprint{50.81}&\textbf{\numprint{0}}&\textbf{\numprint{0.10}}&\textbf{\numprint{0}}&\numprint{6.58} \\
\numprint{15000}&\numprint{4207335}&\numprint{9413}&\textbf{\numprint{0}}&\numprint{80.07}&\textbf{\numprint{0}}&\numprint{526.90}&\textbf{\numprint{0}}&\textbf{\numprint{2.98}}&\textbf{\numprint{0}}&\numprint{3.37} \\
\numprint{16000}&\numprint{4807361}&\numprint{10042}&\textbf{\numprint{0}}&\numprint{96.05}&\textbf{\numprint{0}}&\numprint{627.97}&\textbf{\numprint{0}}&\numprint{3.76}&\textbf{\numprint{0}}&\textbf{\numprint{3.44}} \\
\numprint{16000}&\numprint{14309249}&\numprint{8401}&\numprint{570}&\numprint{101.08}&\textbf{\numprint{0}}&\numprint{1108.03}&\textbf{\numprint{0}}&\numprint{34.14}&\textbf{\numprint{0}}&\textbf{\numprint{29.35}} \\
\numprint{17000}&\numprint{803659}&\numprint{11239}&\numprint{11522}&\numprint{5.17}&\textbf{\numprint{0}}&\numprint{64.71}&\textbf{\numprint{0}}&\textbf{\numprint{0.03}}&\textbf{\numprint{0}}&\numprint{0.63} \\
\numprint{17000}&\numprint{10662300}&\numprint{9898}&\numprint{14202}&\numprint{7.66}&\textbf{\numprint{0}}&\numprint{60.82}&\textbf{\numprint{0}}&\textbf{\numprint{0.14}}&\textbf{\numprint{0}}&\numprint{12.88} \\
\numprint{18000}&\numprint{5064751}&\numprint{11412}&\numprint{256}&\numprint{124.12}&\textbf{\numprint{0}}&\numprint{683.34}&\textbf{\numprint{0}}&\textbf{\numprint{0.08}}&\textbf{\numprint{0}}&\numprint{6.40} \\
\numprint{18000}&\numprint{1970506}&\numprint{11782}&\numprint{32}&\numprint{53.56}&\textbf{\numprint{0}}&\numprint{372.49}&\textbf{\numprint{0}}&\textbf{\numprint{0.05}}&\textbf{\numprint{0}}&\numprint{1.21} \\
\bottomrule
\end{tabular}
\vspace*{-.5cm}
\end{center}
\end{table}

\begin{table}[!tbh]
\caption{We give the kernel size $k$ and running time $t$ for each reduction technique on Erd\H{o}s and BioGRID graphs. We further give the number of vertices $n$ and edges $m$ for each graph.} 
\label{table:kernel-erdos-biogrid}
\vspace*{-.5cm}
\scriptsize
\setlength{\tabcolsep}{.7ex}
\begin{center}
\begin{tabular}[tb]{lrr@{\hskip 7pt} rr@{\hskip 7pt} rr@{\hskip 7pt} rr@{\hskip 7pt} rr} 
\toprule
\multicolumn{3}{c}{Graph} & \multicolumn{2}{c}{\critical}& \multicolumn{2}{c}{\maxcritical}  & \multicolumn{2}{c}{\advanced}&\multicolumn{2}{c}{\simple} \\
\multicolumn{1}{l}{Name}  & \multicolumn{1}{c}{$n$}  &  \multicolumn{1}{c}{$m$}      & \multicolumn{1}{c}{$k$} & \multicolumn{1}{c}{$t$}    & \multicolumn{1}{c}{$k$} & \multicolumn{1}{c}{$t$} & \multicolumn{1}{c}{$k$} & \multicolumn{1}{c}{$t$} & \multicolumn{1}{c}{$k$} & \multicolumn{1}{c}{$t$}  \\  
\cmidrule(r){1-1}\cmidrule(r){2-2}\cmidrule(r){3-3}\cmidrule(r){4-4}\cmidrule(r){5-5}\cmidrule(r){6-6}\cmidrule(r){7-7}\cmidrule(r){8-8}\cmidrule(r){9-9}\cmidrule(r){10-10}\cmidrule{11-11} \\[2pt]
\multicolumn{11}{l}{\bf{Erd\H{o}s Graphs}} \\[2pt]
erdos971&\numprint{472}&\numprint{1314}&\numprint{350}&\numprint{0.01}&\numprint{124}&\numprint{0.06}&\textbf{\numprint{0}}&\textbf{\numprint{0.00}}&\textbf{\numprint{0}}&\textbf{\numprint{0.00}} \\
erdos972&\numprint{5488}&\numprint{8972}&\numprint{46}&\numprint{0.20}&\textbf{\numprint{0}}&\numprint{13.83}&\textbf{\numprint{0}}&\numprint{0.01}&\textbf{\numprint{0}}&\textbf{\numprint{0.00}} \\
erdos981&\numprint{485}&\numprint{1381}&\numprint{373}&\numprint{0.01}&\numprint{205}&\numprint{0.08}&\textbf{\numprint{0}}&\textbf{\numprint{0.00}}&\textbf{\numprint{0}}&\textbf{\numprint{0.00}} \\
erdos982&\numprint{5822}&\numprint{9505}&\numprint{44}&\numprint{0.22}&\textbf{\numprint{0}}&\numprint{15.65}&\textbf{\numprint{0}}&\numprint{0.01}&\textbf{\numprint{0}}&\textbf{\numprint{0.00}} \\
erdos991&\numprint{492}&\numprint{1417}&\numprint{398}&\numprint{0.01}&\numprint{218}&\numprint{0.07}&\textbf{\numprint{0}}&\numprint{0.01}&\textbf{\numprint{0}}&\textbf{\numprint{0.00}} \\
erdos992&\numprint{6100}&\numprint{9939}&\numprint{42}&\numprint{0.22}&\textbf{\numprint{0}}&\numprint{17.11}&\textbf{\numprint{0}}&\numprint{0.01}&\textbf{\numprint{0}}&\textbf{\numprint{0.00}} \\[6pt]
\multicolumn{11}{l}{\bf{BioGRID Graphs}} \\[2pt]
Arabidopsis-thaliana&\numprint{7225}&\numprint{17223}&\numprint{1534}&\numprint{1.16}&\numprint{188}&\numprint{19.18}&\textbf{\numprint{0}}&\textbf{\numprint{0.02}}&\numprint{31}&\numprint{0.00} \\
Bos-taurus&\numprint{389}&\numprint{357}&\numprint{109}&\numprint{0.01}&\numprint{3}&\numprint{0.05}&\textbf{\numprint{0}}&\textbf{\numprint{0.00}}&\textbf{\numprint{0}}&\textbf{\numprint{0.00}} \\
Caenorhabditis-elegans&\numprint{3974}&\numprint{7918}&\numprint{758}&\numprint{0.30}&\numprint{18}&\numprint{5.38}&\textbf{\numprint{0}}&\numprint{0.01}&\textbf{\numprint{0}}&\textbf{\numprint{0.00}} \\
Candida-albicans-SC5314&\numprint{379}&\numprint{371}&\numprint{66}&\numprint{0.00}&\numprint{14}&\numprint{0.05}&\textbf{\numprint{0}}&\textbf{\numprint{0.00}}&\textbf{\numprint{0}}&\textbf{\numprint{0.00}} \\
Danio-rerio&\numprint{238}&\numprint{249}&\numprint{71}&\numprint{0.00}&\numprint{11}&\numprint{0.02}&\textbf{\numprint{0}}&\textbf{\numprint{0.00}}&\textbf{\numprint{0}}&\textbf{\numprint{0.00}} \\
Drosophila-melanogaster&\numprint{8229}&\numprint{39086}&\numprint{3479}&\numprint{2.78}&\numprint{973}&\numprint{28.01}&\textbf{\numprint{0}}&\textbf{\numprint{0.02}}&\numprint{30}&\numprint{0.01} \\
Escherichia-coli&\numprint{139}&\numprint{122}&\numprint{14}&\numprint{0.00}&\textbf{\numprint{0}}&\numprint{0.01}&\textbf{\numprint{0}}&\textbf{\numprint{0.00}}&\textbf{\numprint{0}}&\textbf{\numprint{0.00}} \\
Gallus-gallus&\numprint{336}&\numprint{343}&\numprint{81}&\numprint{0.00}&\numprint{3}&\numprint{0.04}&\textbf{\numprint{0}}&\textbf{\numprint{0.00}}&\textbf{\numprint{0}}&\textbf{\numprint{0.00}} \\
Hepatitus-C-Virus&\numprint{113}&\numprint{111}&\numprint{2}&\numprint{0.00}&\textbf{\numprint{0}}&\numprint{0.01}&\textbf{\numprint{0}}&\textbf{\numprint{0.00}}&\textbf{\numprint{0}}&\textbf{\numprint{0.00}} \\
Homo-sapiens&\numprint{19592}&\numprint{169285}&\numprint{5675}&\numprint{18.70}&\numprint{1629}&\numprint{210.89}&\textbf{\numprint{0}}&\textbf{\numprint{0.05}}&\numprint{150}&\numprint{0.03} \\
Human-Herpesvirus-1&\numprint{140}&\numprint{140}&\numprint{12}&\numprint{0.00}&\textbf{\numprint{0}}&\numprint{0.01}&\textbf{\numprint{0}}&\textbf{\numprint{0.00}}&\textbf{\numprint{0}}&\textbf{\numprint{0.00}} \\
Human-Herpesvirus-4&\numprint{219}&\numprint{217}&\numprint{2}&\numprint{0.00}&\textbf{\numprint{0}}&\numprint{0.02}&\textbf{\numprint{0}}&\textbf{\numprint{0.00}}&\textbf{\numprint{0}}&\textbf{\numprint{0.00}} \\
Human-Herpesvirus-8&\numprint{137}&\numprint{138}&\textbf{\numprint{0}}&\textbf{\numprint{0.00}}&\textbf{\numprint{0}}&\numprint{0.01}&\textbf{\numprint{0}}&\textbf{\numprint{0.00}}&\textbf{\numprint{0}}&\textbf{\numprint{0.00}} \\
Human-HIV-1&\numprint{1030}&\numprint{1186}&\textbf{\numprint{0}}&\textbf{\numprint{0.00}}&\textbf{\numprint{0}}&\numprint{0.36}&\textbf{\numprint{0}}&\textbf{\numprint{0.00}}&\textbf{\numprint{0}}&\textbf{\numprint{0.00}} \\
Mus-musculus&\numprint{8567}&\numprint{19265}&\numprint{1377}&\numprint{1.39}&\numprint{51}&\numprint{27.56}&\textbf{\numprint{0}}&\numprint{0.01}&\textbf{\numprint{0}}&\textbf{\numprint{0.00}} \\
Oryctolagus-cuniculus&\numprint{183}&\numprint{168}&\numprint{28}&\numprint{0.00}&\textbf{\numprint{0}}&\numprint{0.02}&\textbf{\numprint{0}}&\textbf{\numprint{0.00}}&\textbf{\numprint{0}}&\textbf{\numprint{0.00}} \\
Plasmodium-falciparum-3D7&\numprint{1224}&\numprint{2443}&\numprint{336}&\numprint{0.04}&\textbf{\numprint{0}}&\numprint{0.40}&\textbf{\numprint{0}}&\textbf{\numprint{0.00}}&\textbf{\numprint{0}}&\textbf{\numprint{0.00}} \\
Rattus-norvegicus&\numprint{3066}&\numprint{4139}&\numprint{533}&\numprint{0.14}&\numprint{15}&\numprint{3.04}&\textbf{\numprint{0}}&\textbf{\numprint{0.01}}&\numprint{6}&\numprint{0.00} \\
Saccharomyces-cerevisiae&\numprint{6660}&\numprint{228752}&\numprint{5732}&\numprint{2.48}&\numprint{5180}&\numprint{212.87}&\textbf{\numprint{4086}}&\textbf{\numprint{0.96}}&\numprint{4575}&\numprint{0.05} \\
Schizosaccharomyces-pombe&\numprint{4143}&\numprint{57049}&\numprint{840}&\numprint{1.41}&\numprint{194}&\numprint{11.72}&\textbf{\numprint{0}}&\textbf{\numprint{0.01}}&\textbf{\numprint{0}}&\numprint{0.02} \\
Xenopus-laevis&\numprint{473}&\numprint{520}&\numprint{160}&\numprint{0.01}&\numprint{16}&\numprint{0.06}&\textbf{\numprint{0}}&\textbf{\numprint{0.00}}&\numprint{7}&\numprint{0.00} \\
\bottomrule
\end{tabular}
\vspace*{-.5cm}
\end{center}
\end{table}

\subsection{Kernel Sizes}

First, we compare the kernel sizes computed by each reduction technique. We first run all algorithms on synthetically generated graphs, on which \critical{} was previously shown to be effective~\cite{butenko-trukhanov}. We generate instances with a known clique number using the Sanchis graph generator\footnote{\url{ftp://dimacs.rutgers.edu/pub/challenge/}}~\cite{sanchis-1996}, and then take the complement. Like Butenko and Trukhanov~\cite{butenko-trukhanov}, we choose the clique number (and thus, the independence number in the complement graph) to be at least $n/2$.

As can be seen in Table~\ref{table:kernel-sanchis}, \critical{} succeeds in reducing the kernel to empty in many cases. During testing, we noticed that \critical{} did not enter a second iteration on most graphs. That is, in general, either the first critical independent set matched the size of a maximum independent set or the remaining graph had an empty critical set. It is unclear what causes this behavior, but we conjecture it could be due to how we compute the maximum matching: we use depth-first search in the bi-double graph. It is unclear which search strategy Butenko and Trukhanov~\cite{butenko-trukhanov} use in their experiments.  
\maxcritical{} always computes an empty kernel on these instances; however, it is significantly slower than \critical. This is because \critical{} computes only 2 maximum matchings on typical instances, while \maxcritical{} computes many more.

We now turn our attention to \simple{} and \advanced{}. \advanced{} is the clear winner on the Sanchis graphs. It always computes an empty kernel, and does so quickly. However, \simple{} also computes empty kernels on 28 of the instances. Even though \simple{} is only faster than \advanced{} on four instances, \simple{} still computes exact solutions on these instances within a few seconds. Therefore, the \advanced{} reductions are not required to make these instances tractable.

We further tested all algorithms on medium-sized real-world graphs. We ran all four reduction algorithms on Erd\H{o}s collaboration graphs from the Pajek data set and on biological graphs from the BioGRID data set (we only show results on those graphs with 100 or more vertices). As seen in Table~\ref{table:kernel-erdos-biogrid}, \maxcritical{} still gives consistently smaller kernels than \critical, but unlike the Sanchis graphs, not all kernels are empty. However, the \simple{} reductions give consistently small kernels on these real-world instances, computing an empty kernel on all but 4 instances, and doing so as fast as \advanced. The size of three of these non-empty kernels is well within the range of feasibility of existing MIS solvers. Neither \simple{} nor \advanced{} can solve the \texttt{Saccharomyces-cerevisiae} data set exactly, and the \simple{} kernel is within 12\% of the \advanced{} kernel size.

\begin{table}[!htb]
\caption{We give the size $k_{\max}$ of largest connected component in the kernel from each reduction technique and the running time $t$ of each algorithm to compute an exact maximum independent set. We further give the number of vertices $n$ and edges $m$ for each graph.}
\label{table:solution-large-networks}
\scriptsize
\setlength{\tabcolsep}{.7ex}
\begin{center}
\begin{tabular}{lrr@{\hskip 13pt} rr@{\hskip 13pt} rr} 
\toprule
\multicolumn{3}{c}{Graph}  & \multicolumn{2}{c}{\ai}  &\multicolumn{2}{c}{\simple+\mcs}\\
\multicolumn{1}{l}{Name}  & \multicolumn{1}{c}{$n$}  &  \multicolumn{1}{c}{$m$}      & \multicolumn{1}{c}{$k_{\max}$} & \multicolumn{1}{c}{$t$} & \multicolumn{1}{c}{$k_{\max}$} & \multicolumn{1}{c}{$t$}  \\
\cmidrule(r){1-1}\cmidrule(r){2-2}\cmidrule(r){3-3}\cmidrule(r){4-4}\cmidrule(r){5-5}\cmidrule(r){6-6}\cmidrule(r){7-7} \\[2pt]
\multicolumn{7}{l}{\bf{LAW Graphs}} \\[2pt]
cnr-2000&\numprint{325557}&\numprint{2738969}&\textbf{\numprint{2404}}&-&\numprint{17626}&- \\
dblp-2010&\numprint{326186}&\numprint{807700}&\textbf{\numprint{0}}&\numprint{0.38}&\textbf{\numprint{0}}&\textbf{\numprint{0.14}} \\
dblp-2011&\numprint{986324}&\numprint{3353618}&\textbf{\numprint{0}}&\numprint{1.03}&\numprint{6}&\textbf{\numprint{0.62}} \\
eu-2005&\numprint{862664}&\numprint{16138468}&\textbf{\numprint{51864}}&-&\numprint{313797}&- \\
hollywood-2009&\numprint{1139905}&\numprint{56375711}&\textbf{\numprint{0}}&\numprint{22.01}&\numprint{9}&\textbf{\numprint{21.38}} \\
hollywood-2011&\numprint{2180759}&\numprint{114492816}&\textbf{\numprint{0}}&\numprint{47.50}&\numprint{17}&\textbf{\numprint{44.66}} \\
in-2004&\numprint{1382908}&\numprint{13591473}&\textbf{\numprint{281}}&\textbf{\numprint{4.39}}&\numprint{11615}&- \\
indochina-2004&\numprint{7414866}&\numprint{150984819}&\textbf{\numprint{8246}}&-&\numprint{509355}&- \\
uk-2002&\numprint{18520486}&\numprint{261787258}&\textbf{\numprint{9408}}&-&\numprint{2043389}&- \\[8pt]
\multicolumn{7}{l}{\bf{SNAP Graphs}} \\[2pt]
as-Skitter&\numprint{1696415}&\numprint{11095298}&\textbf{\numprint{597}}&\textbf{\numprint{2111.02}}&\numprint{21174}&- \\
ca-AstroPh&\numprint{18772}&\numprint{198050}&\textbf{\numprint{0}}&\numprint{0.07}&\textbf{\numprint{0}}&\textbf{\numprint{0.02}} \\
ca-CondMat&\numprint{23133}&\numprint{93439}&\textbf{\numprint{0}}&\numprint{0.04}&\textbf{\numprint{0}}&\textbf{\numprint{0.01}} \\
ca-GrQc&\numprint{5242}&\numprint{14484}&\textbf{\numprint{0}}&\numprint{0.04}&\textbf{\numprint{0}}&\textbf{\numprint{0.00}} \\
ca-HepPh&\numprint{12008}&\numprint{118489}&\textbf{\numprint{0}}&\numprint{0.05}&\numprint{7}&\textbf{\numprint{0.01}} \\
ca-HepTh&\numprint{9877}&\numprint{25973}&\textbf{\numprint{0}}&\numprint{0.05}&\textbf{\numprint{0}}&\textbf{\numprint{0.01}} \\
email-Enron&\numprint{36692}&\numprint{183831}&\textbf{\numprint{0}}&\numprint{0.11}&\numprint{6}&\textbf{\numprint{0.03}} \\
email-EuAll&\numprint{265214}&\numprint{364481}&\textbf{\numprint{0}}&\textbf{\numprint{0.09}}&\textbf{\numprint{0}}&\textbf{\numprint{0.09}} \\
p2p-Gnutella04&\numprint{10876}&\numprint{39994}&\textbf{\numprint{0}}&\textbf{\numprint{0.01}}&\numprint{7}&\textbf{\numprint{0.01}} \\
p2p-Gnutella05&\numprint{8846}&\numprint{31839}&\textbf{\numprint{0}}&\textbf{\numprint{0.01}}&\textbf{\numprint{0}}&\textbf{\numprint{0.01}} \\
p2p-Gnutella06&\numprint{8717}&\numprint{31525}&\textbf{\numprint{0}}&\textbf{\numprint{0.01}}&\textbf{\numprint{0}}&\textbf{\numprint{0.01}} \\
p2p-Gnutella08&\numprint{6301}&\numprint{20777}&\textbf{\numprint{0}}&\textbf{\numprint{0.01}}&\textbf{\numprint{0}}&\textbf{\numprint{0.01}} \\
p2p-Gnutella09&\numprint{8114}&\numprint{26013}&\textbf{\numprint{0}}&\numprint{0.02}&\textbf{\numprint{0}}&\textbf{\numprint{0.01}} \\
p2p-Gnutella24&\numprint{26518}&\numprint{65369}&\textbf{\numprint{0}}&\textbf{\numprint{0.02}}&\textbf{\numprint{0}}&\textbf{\numprint{0.02}} \\
p2p-Gnutella25&\numprint{22687}&\numprint{54705}&\textbf{\numprint{0}}&\numprint{0.02}&\textbf{\numprint{0}}&\textbf{\numprint{0.01}} \\
p2p-Gnutella30&\numprint{36682}&\numprint{88328}&\textbf{\numprint{0}}&\textbf{\numprint{0.02}}&\textbf{\numprint{0}}&\textbf{\numprint{0.02}} \\
p2p-Gnutella31&\numprint{62586}&\numprint{147892}&\textbf{\numprint{0}}&\numprint{0.05}&\textbf{\numprint{0}}&\textbf{\numprint{0.03}} \\
roadNet-CA&\numprint{1965206}&\numprint{2766607}&\textbf{\numprint{10807}}&-&\numprint{89667}&- \\
roadNet-PA&\numprint{1088092}&\numprint{1541898}&\textbf{\numprint{5834}}&-&\numprint{35780}&- \\
roadNet-TX&\numprint{1379917}&\numprint{1921660}&\textbf{\numprint{4102}}&-&\numprint{49143}&- \\
soc-Epinions1&\numprint{75879}&\numprint{405740}&\textbf{\numprint{0}}&\numprint{0.07}&\numprint{7}&\textbf{\numprint{0.06}} \\
soc-LiveJournal1&\numprint{4847571}&\numprint{42851237}&\textbf{\numprint{295}}&\textbf{\numprint{8.09}}&\numprint{28037}&- \\
soc-pokec&\numprint{1632803}&\numprint{22301964}&\textbf{\numprint{651503}}&-&\numprint{748755}&- \\
soc-Slashdot0811&\numprint{77360}&\numprint{469180}&\textbf{\numprint{0}}&\textbf{\numprint{0.07}}&\numprint{8}&\numprint{0.13} \\
soc-Slashdot0902&\numprint{82168}&\numprint{504230}&\textbf{\numprint{0}}&\textbf{\numprint{0.11}}&\numprint{15}&\numprint{0.15} \\
web-BerkStan&\numprint{685230}&\numprint{6649470}&\textbf{\numprint{1478}}&\textbf{\numprint{143.43}}&\numprint{62741}&- \\
web-Google&\numprint{875713}&\numprint{4322051}&\textbf{\numprint{70}}&\textbf{\numprint{1.23}}&\numprint{770}&\numprint{1.51} \\
web-NotreDame&\numprint{325729}&\numprint{1090108}&\textbf{\numprint{3548}}&\textbf{\numprint{12.27}}&\numprint{3578}&- \\
web-Stanford&\numprint{281903}&\numprint{1992636}&\textbf{\numprint{2619}}&-&\numprint{10715}&- \\
wiki-Talk&\numprint{2394385}&\numprint{4659565}&\textbf{\numprint{0}}&\textbf{\numprint{0.44}}&\textbf{\numprint{0}}&\numprint{2.32} \\
wiki-Vote&\numprint{7115}&\numprint{100762}&\textbf{\numprint{0}}&\textbf{\numprint{0.01}}&\textbf{\numprint{0}}&\numprint{0.02} \\[8pt]
\multicolumn{7}{l}{\bf{KONECT Graphs}} \\[2pt]
flickr-growth&\numprint{2302925}&\numprint{22838276}&\textbf{\numprint{9}}&\textbf{\numprint{1.60}}&\numprint{139}&\numprint{31.59} \\
flickr-links&\numprint{1715255}&\numprint{15555041}&\textbf{\numprint{9}}&\textbf{\numprint{1.10}}&\numprint{68}&\numprint{17.04} \\
libimseti&\numprint{220970}&\numprint{17233144}&\textbf{\numprint{49399}}&\textbf{\numprint{1371.18}}&\numprint{141008}&- \\
orkut-links&\numprint{3072441}&\numprint{117185083}&\textbf{\numprint{2545612}}&-&\numprint{2701058}&- \\
petster-carnivore&\numprint{623766}&\numprint{15695166}&\textbf{\numprint{0}}&\textbf{\numprint{2.50}}&\numprint{117}&\numprint{4.77} \\
petster-cat&\numprint{149700}&\numprint{5448197}&\textbf{\numprint{66}}&\textbf{\numprint{2.83}}&\numprint{68152}&- \\
petster-dog&\numprint{426820}&\numprint{8543549}&\textbf{\numprint{231}}&\textbf{\numprint{4.59}}&\numprint{139270}&- \\
youtube-links&\numprint{1138499}&\numprint{2990443}&\textbf{\numprint{0}}&\textbf{\numprint{0.43}}&\numprint{19}&\numprint{3.12} \\
youtube-u-growth&\numprint{3223643}&\numprint{9376594}&\textbf{\numprint{0}}&\textbf{\numprint{1.51}}&\numprint{33}&\numprint{17.59} \\
baidu-internallink&\numprint{2141300}&\numprint{17014946}&\textbf{\numprint{10}}&\textbf{\numprint{1.02}}&\numprint{71}&\numprint{36.99} \\
baidu-relatedpages&\numprint{415641}&\numprint{2374044}&\textbf{\numprint{492}}&\textbf{\numprint{1.59}}&\numprint{11458}&- \\
hudong-internallink&\numprint{1984484}&\numprint{14428382}&\textbf{\numprint{79}}&\textbf{\numprint{1.89}}&\numprint{1546}&\numprint{45.92} \\
\hline
\end{tabular}
\end{center}
\end{table}

\subsection{Exact Solutions on Large-Scale Complex Networks.}
We now focus on computing an exact MIS for the larger instances considered by Akiba and Iwata~\cite{akiba-tcs-2016}. We test the branch-and-reduce algorithm by Akiba and Iwata (\ai) that uses \advanced{} reductions with branching rules during recursion. We also run \simple{} to kernelize the graph, and then run \mcs{} on the remaining connected components (\simple+\mcs). Results are presented in Table~\ref{table:solution-large-networks}. Since \critical{} and \maxcritical{} are slow on large instances and less effective on medium-sized real-world instances (see Table~\ref{table:kernel-erdos-biogrid}), we exclude them from these experiments.

Similar to the original experiments of Akiba and Iwata~\cite{akiba-tcs-2016}, \ai{} computes an exact MIS on 42 of these instances. However, surprisingly, \simple+\mcs{} also computes exact solutions for 33 of these instances. In the remaining nine instances where \ai{} computes a solution but \simple+\mcs{} does not, we see that the size $k_{\max}$ of the maximum connected component in the kernel is significantly smaller for \ai{}. For six of these instances, $k_{\max}$ is less than $600$, which is within the range of traditional solvers. Therefore, we conclude that the speed of \ai{} is primarily due to the initial kernelization on these instances. However, the remaining three instances---\texttt{web-BerkStan}, \texttt{web-NotreDame} and \texttt{libimseti}---have kernels that are too large for traditional solvers. Therefore, these instances benefit the most from the branch-and-reduce paradigm.

\section{Conclusion and Future Work}
Although efficient in practice, the techniques used by Akiba and Iwata~\cite{akiba-tcs-2016} are not necessary for computing a maximum independent set exactly in many large complex networks. Our results further suggest that the initial kernelization is far more effective than the techniques used in branch-and-bound. Further, while the critical independent set reduction due to Butenko and Trukhanov~\cite{butenko-trukhanov} and the variant due to Larson~\cite{larson-2007} compute small kernels in practice, they are too slow to compete with other reductions on real-world sparse graphs.

This leaves several open questions that are interesting for future research. In particular, we would like to understand the structure that causes branch-and-reduce techniques to be fast on some graphs, but slow on other (similar) instances. Is it possible to speed up branch-and-reduce algorithms by applying only simple kernelization techniques, and reserving advanced techniques for ``difficult'' portions of the graph? As we've seen, advanced rules are not always necessary. Finally, much time is devoted to computing reductions in branch-and-reduce algorithms, perhaps more advanced (but slower) pruning techniques are now viable for these algorithms. 

\FloatBarrier

\appendix
\section{Computing a Maximum Critical Independent Set}
\label{section:matchings}

\begin{algorithm2e}[!tb]
\DontPrintSemicolon
\SetCommentSty{}
\caption{Larson's algorithm}
\textbf{input} graph $G=(V,E)$\;
\textbf{output} a maximum critical independent set $I$\;
$I \leftarrow \emptyset$ \\
$B(G) \leftarrow$ bi-double of $G$ \\
$V_{\mathrm{remain}} \leftarrow V$ \\
\For{$v$ in $V_{\mathrm{remain}}$}{
 $B_{v}(G) \leftarrow B(G) - N[\{v,v^{\prime}\}]$ \\
 \eIf{$\alpha(B(G)) = \alpha(B_{v}(G)) + 2$}{
    $I \leftarrow I \cup \{v\}$\;
    $V_{\mathrm{remain}} \leftarrow V_{\mathrm{remain}}\setminus N[v]$
 }{
    $V_{\mathrm{remain}} \leftarrow V_{\mathrm{remain}}\setminus \{v\}$
 }
}
\textbf{return} $I$
\label{alg:mcis}
\end{algorithm2e}

We now we describe Larson's algorithm~\cite{larson-2007} for computing a maximum critical independent set, and show that it can be executed in time $O(m^2)$. We assume that the graph is connected, and therefore $m = \Omega(n)$.

Algorithm~\ref{alg:mcis} gives pseudocode for Larson's algorithm. The running time is dominated by the time to compute each of the $O(n)$ independence numbers. We can naively compute each one by computing a bipartite maximum matching in each bi-double graph in $O(n^2m)$ total time, or $O(n^{3/2}m)$ total time with the Hopcroft-Karp algorithm~\cite{hopcroft-karp}.
However, as we now prove, we can reduce this to $O(m^2)$ time, which is more efficient for sparse graphs, where $m = o(n^{3/2})$.
\begin{theorem}
Larson's algorithm can be executed in time $O(m^2)$.
\end{theorem}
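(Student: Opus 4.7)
The plan is to carefully amortize the cost of the $O(n)$ maximum matching computations across the iterations of the for-loop. First I would compute a single maximum matching $M$ in $B(G)$ once, using the standard augmenting path algorithm, in $O(nm)$ time; since $G$ is connected this is $O(m^2)$. By König's theorem, $\alpha(H) = |V(H)| - \mu(H)$ for any bipartite $H$, so the tests $\alpha(B(G)) = \alpha(B_v(G)) + 2$ reduce to checking matching sizes, and it suffices to compute $\mu(B_v(G))$ for each $v \in V_{\text{remain}}$.

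The key observation is that I never need to recompute a matching of $B_v(G)$ from scratch. Because $B_v(G)$ is obtained from $B(G)$ by deleting $N[\{v,v'\}]$, which contains exactly $2\deg(v)+2$ vertices (counting $v$, $v'$, $N(v)$, and $N(v)'$), the restriction $M_v := M \cap E(B_v(G))$ is a valid matching in $B_v(G)$, and at most $2\deg(v)+2$ edges of $M$ can have an endpoint among the deleted vertices. Hence
\[
|M_v| \;\geq\; |M| - 2\deg(v) - 2.
\]
Since $B_v(G)$ is a subgraph of $B(G)$, we also have $\mu(B_v(G)) \leq \mu(B(G)) = |M|$, so the number of augmenting paths required to extend $M_v$ to a maximum matching of $B_v(G)$ is at most
\[
\mu(B_v(G)) - |M_v| \;\leq\; |M| - |M_v| \;\leq\; 2\deg(v) + 2 \;=\; O(\deg(v)).
\]
Each augmenting path search in $B_v(G)$ runs in $O(m)$ time, so processing vertex $v$ costs $O(\deg(v)\cdot m)$. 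Note $M$ itself is not modified permanently between iterations (the algorithm only decides whether to add $v$ to $I$ and shrinks $V_{\text{remain}}$), so I can always reuse the stored $M$ as the starting point for the next $v$.

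Summing over all iterations and adding the cost of the initial matching,
\[
O(nm) \;+\; \sum_{v \in V} O(\deg(v)\cdot m) \;=\; O(nm) + O\!\left(m \cdot \textstyle\sum_{v} \deg(v)\right) \;=\; O(nm) + O(m^2) \;=\; O(m^2),
\]
which gives the claimed bound. The main technical point is the double inequality $|M| - |M_v| \leq 2\deg(v)+2$ and $\mu(B_v(G)) \leq |M|$, which together cap the number of augmentations per vertex; everything else is routine bookkeeping (maintaining $M$ across iterations and skipping vertices removed from $V_{\text{remain}}$). The handshake identity $\sum_v \deg(v) = 2m$ then delivers the amortization.
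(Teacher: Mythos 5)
Your proof is correct and follows essentially the same route as the paper: compute one maximum matching of $B(G)$, restrict it to $B_v(G)$ (losing only $O(\deg(v))$ matched edges since only $N[\{v,v'\}]$ is deleted), bound the number of augmenting-path searches per vertex by that loss, and sum via $\sum_v \deg(v) = O(m)$ to obtain $O(m^2)$. The only cosmetic difference is your explicit use of K\"onig's theorem and the slightly looser count $2\deg(v)+2$ versus the paper's $|N(v)|+|N(v')|$, which changes nothing asymptotically.
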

\begin{proof}
We begin by computing one maximum matching $M_{B(G)}$ in $B(G)$, and computing the 
independence number $\alpha{(B(G))}$ from $M_{B(G)}$. 
When we compute a new maximum matching $M_{B_v(G)}$ for bi-double graph $B_v({G})$ we use $M_{B(G)}$ as an initial matching. We first create a new matching $M^{\prime}_{B(G)}$ from $M_{B(G)}$, by removing any matched edges
with vertices in $N[\{v,v^{\prime}\}]$, which are removed to create $B_v(G)$. This removes at most $|N(v)| + |N(v^{\prime})|$ edges, as each neighbor can be in at most one matching. Since $M_{B(G)}$ is a maximum matching in $B(G)$, $|M_{B_v(G)}| \leq |M_{B(G)}| \leq |M^{\prime}_{B(G)}| + |N(v)| + |N(v^{\prime})|$, and therefore at most $|N(v)| + |N(v^{\prime})|$ iterations of an augmenting path algorithm are required to compute $M_{B_v(G)}$ from $M^{\prime}_{B(G)}$.

Thus, when evaluating vertex $v$, we can compute the independence number $\alpha(B_v(G))$ in time $O((|N(v)| + |N(v^{\prime})|)(n+m))$. And thus, the running time for the overall algorithm is at most
\begin{align*}
\sum_{v\in V}O((|N(v)| + |N(v^{\prime})|)\cdot(n+m)) = O(m^2).%
\end{align*}\end{proof}
\end{document}